\documentclass[10pt]{article}
\usepackage{fancyhdr}
\usepackage{extramarks}
\usepackage{amsmath}
\usepackage{amsthm}
\usepackage{amsfonts}
\usepackage{siunitx}
\usepackage{tikz}
\usepackage[plain]{algorithm}
\usepackage{algpseudocode}
\usepackage{multirow}
\usepackage{booktabs}
\usepackage{graphicx}
\usepackage{subfigure}
\usepackage{mathrsfs}
\usepackage[colorlinks,linkcolor=black,anchorcolor=black,citecolor=black,urlcolor=blue]{hyperref}
\usepackage{amsmath,bm}
\usepackage{booktabs}
\usepackage{mathtools}
\usepackage{amssymb}
\usepackage{caption}
\usepackage{hyperref}
\usepackage{capt-of}
\usepackage{mciteplus}
\usepackage{cite}
\usepackage{mathrsfs}
\usepackage{enumitem}
\usepackage[title,titletoc,toc]{appendix}
\usepackage{xr}
\usepackage{parskip}
\usepackage{soul}
\usepackage{textcomp}
\usepackage[colaction]{multicol}
\usepackage[switch]{lineno}
\usepackage{lipsum}
\usepackage{etoolbox}
\usepackage{longtable}
\usepackage{array}
\usepackage{tablefootnote}
\usepackage{ragged2e}
\usepackage{soul}
\newcolumntype{C}[1]{>{\centering\arraybackslash}p{#1}}
\usetikzlibrary{automata,positioning}
\usepackage{xr}
\usepackage{xr-hyper}
\usepackage{import}
\usepackage{cleveref}
\allowdisplaybreaks
\makeatother

\usepackage{url}
\usepackage{todonotes}

\usetikzlibrary{automata,positioning}

\newtheorem{thm}{Theorem}[section]

\newtheorem{Def}{Definition}[section]

\newtheorem{cor}{Corollary}[section]

\begin{document}	

	\title{Subspace Tensor Orthogonal Rotation Model (STORM) for Batch Alignment, Cell Type Deconvolution, and Gene Imputation in Spatial Transcriptomic Data}
	
	\author{Sean Cottrell$^{1,2}$, Guo-Wei Wei$^{1,3,4}$\footnote{
			Corresponding author.		Email: weig@msu.edu}, Longxiu Huang$^{1,2}$\footnote{
			Corresponding author.		Email: huangl3@msu.edu} \\
		\\
		$^1$ Department of Mathematics, \\
		Michigan State University, East Lansing, MI 48824, USA.\\
		 $^2$ Department of Computational Mathematics, Science, and Engineering, \\
		Michigan State University, East Lansing, MI 48824, USA.\\
		$^3$ Department of Electrical and Computer Engineering,\\
		Michigan State University, East Lansing, MI 48824, USA. \\
		$^4$ Department of Biochemistry and Molecular Biology,\\
		Michigan State University, East Lansing, MI 48824, USA. 
	}
	\date{\today} % Date for the report
	
	\maketitle
	
	\begin{abstract}

Spatial transcriptomics data analysis integrates cellular transcriptional activity with spatial coordinates to identify spatial domains, infer cell-type dynamics, and characterize gene expression patterns within tissues. Despite recent advances, significant challenges remain, including the treatment of batch effects, the handling of mixed cell-type signals, and the imputation of poorly measured or missing gene expression. This work addresses these challenges by introducing a novel Subspace Tensor Orthogonal Rotation Model (STORM) that aligns multiple slices which vary in their spatial dimensions and geometry by considering them at the level of physical patterns or microenvironments. To this end, STORM presents an irregular tensor factorization technique for decomposing a collection of gene expression matrices and integrating them into a shared latent space for downstream analysis. 
%The method incorporates gene–gene interactions across samples and leverages spatial adjacency among cells both within and between samples to smooth non-biological variation. Furthermore, it exploits a hypothesized redundancy in gene expression at the microenvironment level to model a low-dimensional latent gene field within slice-specific spatial subspaces which can be aligned across samples via a rotation. 
In contrast to black-box deep learning approaches, the proposed model is inherently interpretable. Numerical experiments demonstrate state-of-the-art performance in vertical and horizontal batch integration, cell-type deconvolution, and unmeasured gene imputation for spatial transcriptomics data. 
	
	\end{abstract}
Keywords: {Spatial Transcriptomics, Tensor Decomposition, Batch Integration, Gene Imputation, Cell Type Deconvolution}
	
	%\newpage
	%
	%{\setcounter{tocdepth}{4} \tableofcontents}
	%\setcounter{page}{1}
 %
	\newpage

	\section{Introduction}

The analysis of gene expression data has seen a series of significant innovations in the last decade. As such, it currently plays a pivotal role in biological and medical research \cite{Jain2024SpatialTranscriptomicsReview}. In recent years, spatial transcriptomics, in particular, has emerged as a powerful tool in advancing our understanding of the relationship between the gene expression of cells and their spatial distributions, which underpins tissue pathology and function annotation. The applications of this spatial information range from inferring cell-cell communication, trajectory inference, spatially variable gene detection, to spatial domain segmentation \cite{st_methods}. Current spatial transcriptomics technologies are varied. In-situ hybridization techniques use fluorescent dye labeled probes hybridized to specific RNA transcripts to measure gene expression activity at subcellular resolution, which has evolved from low-gene-throughput single-molecular FISH (smFISH) to high-gene-throughput multiplexed error robust FISH (MERFISH) and sequential FISH (seqFISH and seqFISH+) technologies. More recently, researchers have developed in situ capturing methods to perform RNA sequencing of the whole transcriptome with positional barcodes in a spatial genomic array aligned to locations on the tissue. These methods range from lower resolution Spatial Transcriptomics (ST) to higher resolution Slide-seq, or even sub-cellular resolution technologies such as Stereo-seq. However, the continued development of ST techniques has created challenges for ST data analysis.  One of these challenges is due to diverse sources and technologies, leading to the so called  batch effects, which prevent the reliable extraction of genuine biological signal present in the  samples. Another challenge is mixed cell-type signals in ST data because low-resolution individual spots frequently capture transcripts from multiple neighboring cell types. This overlap leads to composite gene expression profiles that obscure true cell-type–specific signals. Consequently, effective deconvolution of mixed cell-type signals is needed for accurate biological interpretation. The other challenge is poorly measured gene expression in ST data due to technical instability, noise, limited sequencing depth, and platform-specific detection biases,  resulting in missing or unreliable expression values that distort spatial patterns and downstream analyses, calling for robust modeling and gene imputation strategies  to recover biologically meaningful signals and improve analytical accuracy.

Much attention has been paid to the integration and modeling of single cell and spatial transcriptomics data \cite{Zhang2025.03.12.642755,cottrell2025multiscale}. A common strategy in data alignment is through an analysis in a shared embedding space, aiming to balance the removal of batch effects with the conservation of nuanced biological information. Harmony \cite{Korsunsky2019Harmony} groups cells into multi-dataset clusters in a joint PCA embedding space, using soft clustering to assign cells to potentially multiple clusters to account for smooth transitions between cell states. STAligner \cite{Zhou2023STAligner} uses the gene expression information and spatial coordinates from multiple ST slices as its input to construct a spatial neighbor graph between spots. Based on these graphs, STAligner uses a graph attention auto-encoder neural network to learn spatially aware embeddings. Spiral \cite{Guo2023SPIRAL} is composed of four neural networks for batch integration to sequentially encode gene expressions and spatial coordinates into a low-dimensional latent space and to reconstruct gene expressions.  Tensor structures have been explored in ST data analysis previously for a variety of tasks. For example, Song et al. introduced a graph-guided neural tensor decomposition utilizing a three layer neural network to learn non-linear relations among all the elements in each mode (spatial and gene) for constructing the factors of a canonical polyadic decomposition (CPD). The reconstructed tensor can then be used to infer denoised gene expression values \cite{Song2023GNTD}. In terms of irregular tensor structures, Ramirez et al. utilized a PARAFAC2 model to jointly embed perturbed and unperturbed gene expression matrices for predicting perturbation responses \cite{Ramirez2025PARAFAC2RISE_bioRxiv}. These approaches demonstrate the utility of shared embeddings and multi-way structure, but a unified tensor framework for multi-sample ST that addresses batch effects remains lacking.

In the context of tensor based modeling, the objects of interest become collections of gene expression matrices, typically irregular in size and often differing in spatial geometry, rendering standard tensor-based models ineffective or inapplicable. However, spatial autocorrelation in gene expression creates redundancy, as neighboring locations that share coherent transcriptional patterns often organize into microenvironments. Importantly, a microenvironment is not merely a property of a single 2D section but a 3D tissue entity. Each ST slice may capture a 2D cross-section of the same underlying pattern. Multi-slice integration can therefore be framed as discovering and comparing shared microenvironment patterns repeatedly sampled across slices. The consideration of microenvironment eliminates the irregularity in the cell mode of the gene expression tensor, making it amenable to classical tensor factorization for joint embedding space representations. However, this approach alone does not automatically resolve batch effects. To this end, one needs explicit cross-sample constraints in the shared embedding space to align microenvironments and suppress batch-driven variation while preserving the genuine biological structure. 

In this study, we address these challenges using a Subspace Tensor Orthogonal Rotation Model for Spatial Transcriptomics (STORM). Mathematically, a collection of single cell or spatial transcriptomics experiments naturally exhibits an irregular 3-way tensor structure, with tensor modes along the spatial and gene dimensions as well as a third mode specifying the slice/sample/batch.  For each slice, batch, or sample, we identify a common gene set and eliminate irregularity in the spatial mode by integrating slices at the level of microenvironments, or spatial subspaces. The orthonormal frames of these subspaces rotate samples into aligned coordinates, which mitigates batch effects across samples. The resulting `subspace tensor' of microenvironments is then factorized.  First, a shared Gram matrix (up to scaling) across slices is enforced. This constraint ensures a common covariance structure among microenvironments and further reduces batch effects. Gene expression profiles are then linearly mapped into these subspaces. The orthonormal frames subsequently reconstruct the low-dimensional spatial structure at the original spatial resolution for downstream analysis. 
We further impose Cartesian product graph smoothing during tensor reconstruction to promote similarity among functionally related genes, spatially adjacent spots, and coupled spots across samples. This mathematical framework provides a powerful solution for both vertical and horizontal integration of spatial transcriptomic data. When paired with a single-cell RNA-seq reference, it also enables inference of cell-type composition in spatial spots and imputation of missing gene counts. Extensive benchmarking demonstrates that STORM achieves state-of-the-art performance in spatial transcriptomic data analysis.
    
\section{Results}\label{sec:results}
  \subsection{Overview of Subspace Tensor Orthogonal Rotation Model (STORM)} 
 STORM learns an interpretable shared latent representation for collections of spatial transcriptomics (ST) slices by integrating (i) slice-adapted spatial subspaces that capture structures at the microenvironment-level, (ii) spatial proximity within and across samples, and (iii) functional gene relations encoded by protein-protein interactions. Figure~\ref{fig:STORM} provides a high-level overview of the workflow and the outputs used in the following evaluations; methodological details and the solver
are described in Section~\ref{sec:method}.  We evaluate STORM on four core ST analysis tasks: horizontal and vertical batch integration, cell-type deconvolution using a single-cell reference, imputation of unmeasured genes in imaging-based ST, and characterization of temporal progression across developmental stages. We show that STORM yields robust performance and strong biological fidelity across datasets, platforms, and tissues. The details of the data considered in this study are summarized in the Supplementary Information Section 5. 
    
    \begin{figure}[H]
        \centering
        \includegraphics[width=\linewidth]{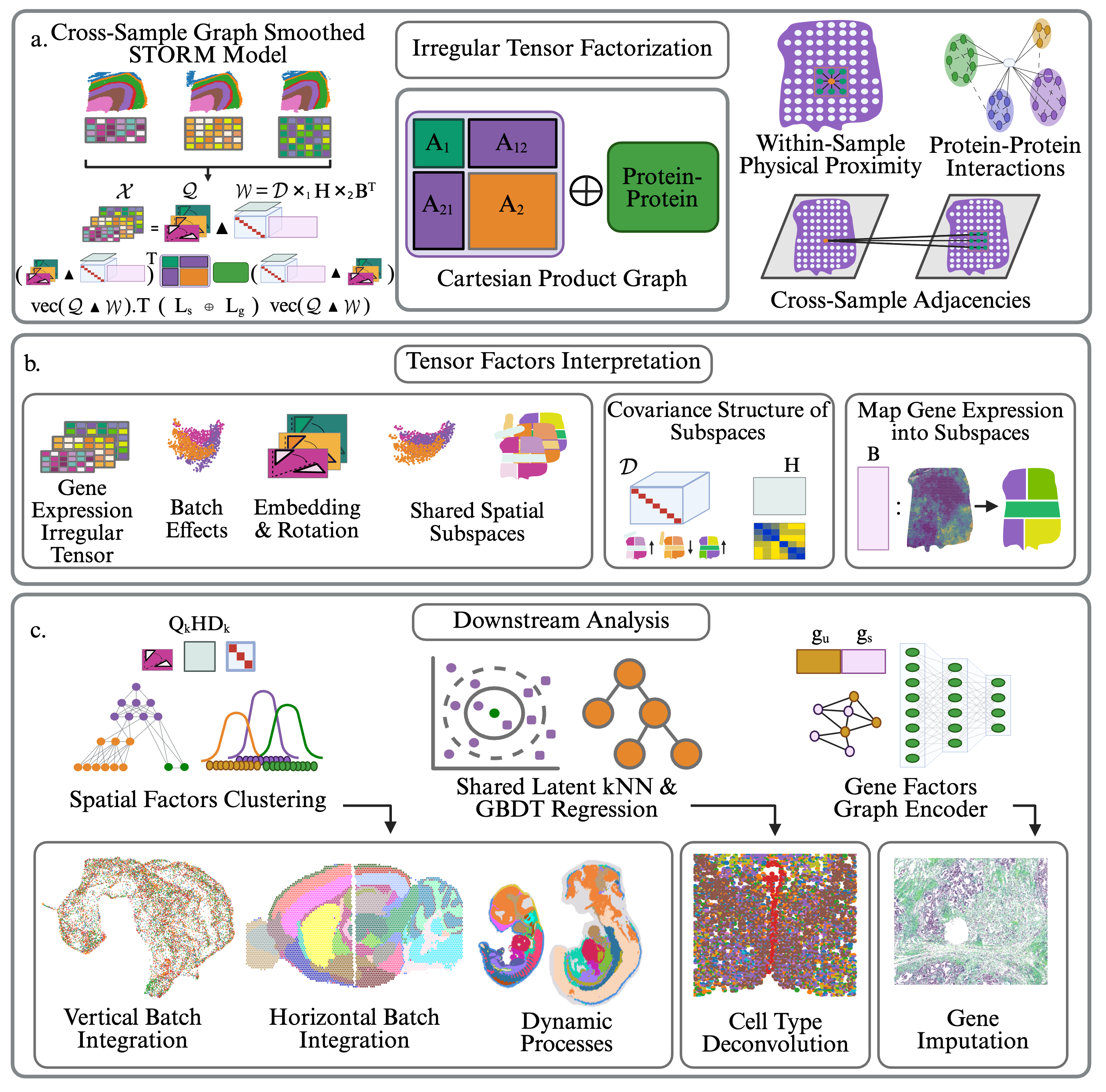}
        \caption{Overview of the STORM framework.
a. Given multi-slice (multi-batch) spatial transcriptomics data, we represent each slice $k=1,...,n_b$ as a spot-by-gene expression matrix $X_k \in \mathbb{R}^{n_{s_k} \times n_g}$ of $n_{s_k}$ spatial spots and $n_g$ genes.  STORM learns, for each slice, an orthonormal spatial alignment matrix $Q_k\in \mathbb{R}^{n_{s_k} \times R}$  that acts only on the spot (spatial) mode and rotates slice-specific spatial subspaces into an $R$ dimensional shared spatial space. In this space, we obtain a regular subspace tensor $\mathcal{W} \in \mathbb{R}^{R \times n_g \times n_b}$ of ST slices which we can factorize. 
b. In this aligned space, STORM estimates shared factors $H\in \mathbb{R}^{R \times R}$ and $B\in \mathbb{R}^{n_g \times R}$ that capture common latent structure across slices, together with a tensor of diagonal scaling matrices $\mathcal{D}\in \mathbb{R}^{R \times R \times n_b}$, such that $\mathcal{D}_{::k} = \text{diag}(w_k)\in \mathbb{R}^{R \times R}$ ($w_k \in \mathbb{R}$) that modulates the contribution of the shared components within each slice. The reconstructed expression  profiles are further refined using spatial proximity, protein-protein interaction (PPI) gene relations, and cross-sample spot  adjacencies. 
c. The resulting shared embedding $Z_k = Q_kHD_k$ supports downstream analyses including data integration, cell-type deconvolution, and gene imputation.} 
        \label{fig:STORM}
    \end{figure}
    
    % Figure \ref{fig:STORM} provides an overview of the STORM framework. The integrative subspace tensors approach begins by finding a series of frames for low dimensional subspaces of the original spatial space. Gene expression activity is mapped into these subspaces, and the pairwise gene factor relations and scalings are iteratively updated to approximate the original gene expression information. The reconstruction is then smoothed across the Cartesian product of a spatial / cross-batch graph and a PPI graph, achieving an effective shared latent space representation of the gene expression data. This latent space structure can then be used for, say, spatial domain detection via clustering. Alternatively, factorizations of ST samples and sc-RNA-seq samples in this joint space can be used to infer cell type compositions of spots as well as imputing unmeasured gene counts.  
    
    \subsection{STORM Effectiveness in Horizontal and Vertical Integration of Batch Affected Tissue Samples}

    We first comprehensively evaluate the effectiveness of STORM on a cross-sample alignment for vertical and horizontal batch integration. In the case of vertical batch integration, we are faced with the issue of batch effects due to technological and experimental setting differences between runs. These batch effects can blur genuine biological variation in the data, and so limit the accuracy of any multi-sample analysis. In Figure \ref{fig:integration}c, we see this challenge presenting itself in the popular Dorsolateral Prefrontal Cortex (DLPFC) Data from 10X Visium. A naive joint PCA achieves a very poor integration of four samples, and spatial domains are blurred and poorly separated from one another. The LIBD dorsolateral prefrontal cortex data contains the spatially resolved transcriptomic profiles of 12 slices as well as their manual annotations of neuronal layers and white matter \cite{dlpfc}. Specifically, the 151673, 151674, 151675, and 151676 samples are each taken from the same donor and so can be used to evaluate the efficacy of various integration methods. 

    We  evaluate integration effectiveness based on a combination of two important metrics. First, we  evaluate the integrated spatial domains predicted by each method via the Adjusted Rand Index (ARI) of the clusters. Meanwhile, a composite metric called the F1-harmonized Local Inverse Simpson Index (F1LISI) can be used to quantify the balance between batch effect removal and biological conservation. The F1LISI index (ranging from 0 to 1) integrates batch-grouped LISI (LISIbatch) and domain-grouped LISI (LISIdomain) through a tunable weighting coefficient $\alpha$, with values closer to 1 indicating superior technical noise elimination while also preserving biological variance \cite{Tran2020BatchCorrectionBenchmark}.

    The performance of each method with respect to the aforementioned  metrics can be seen in Figure \ref{fig:integration}d, where one can note that STORM is able to achieve the highest measure of spatial domain detection accuracy according to ARI. Across the integration of these four samples, STORM achieves a mean ARI of 0.626, outperforming the next best method, STAligner \cite{Zhou2023STAligner}, by 7.6\% in this respect. In terms of integration quality, STORM displays state-of-the-art results with a median F1LISI index score of 0.932, 6.75\% higher than GraphST \cite{Long2023GraphST}, the next best method for this metric. We emphasize that STORM is capable of simultaneously achieving an accurate spatial domain detection alongside an effective batch integration. Methods such as STAligner and Splane \cite{Xu2023SPACEL} which perform well in terms of ARI achieve very low F1LISI median scores while GraphST achieved a higher median F1LISI score, but the resultant spatial domain detection was inaccurate. Spiral is the only other method capable of performing well in both cases, though still trails STORM by 17.3\% and 11.2\% in terms of ARI and median F1LISI respectively. Other methods, such as DeepST \cite{Xu2022DeepST} and SEDR \cite{Xu2024SEDR} also exhibit considerable trade-offs in their integration quality versus clustering performance. A full description of each method's performance can be found in the Supplementary Information Section 6. 
    
    \begin{figure}[H]
        \centering
        \includegraphics[width=\linewidth]{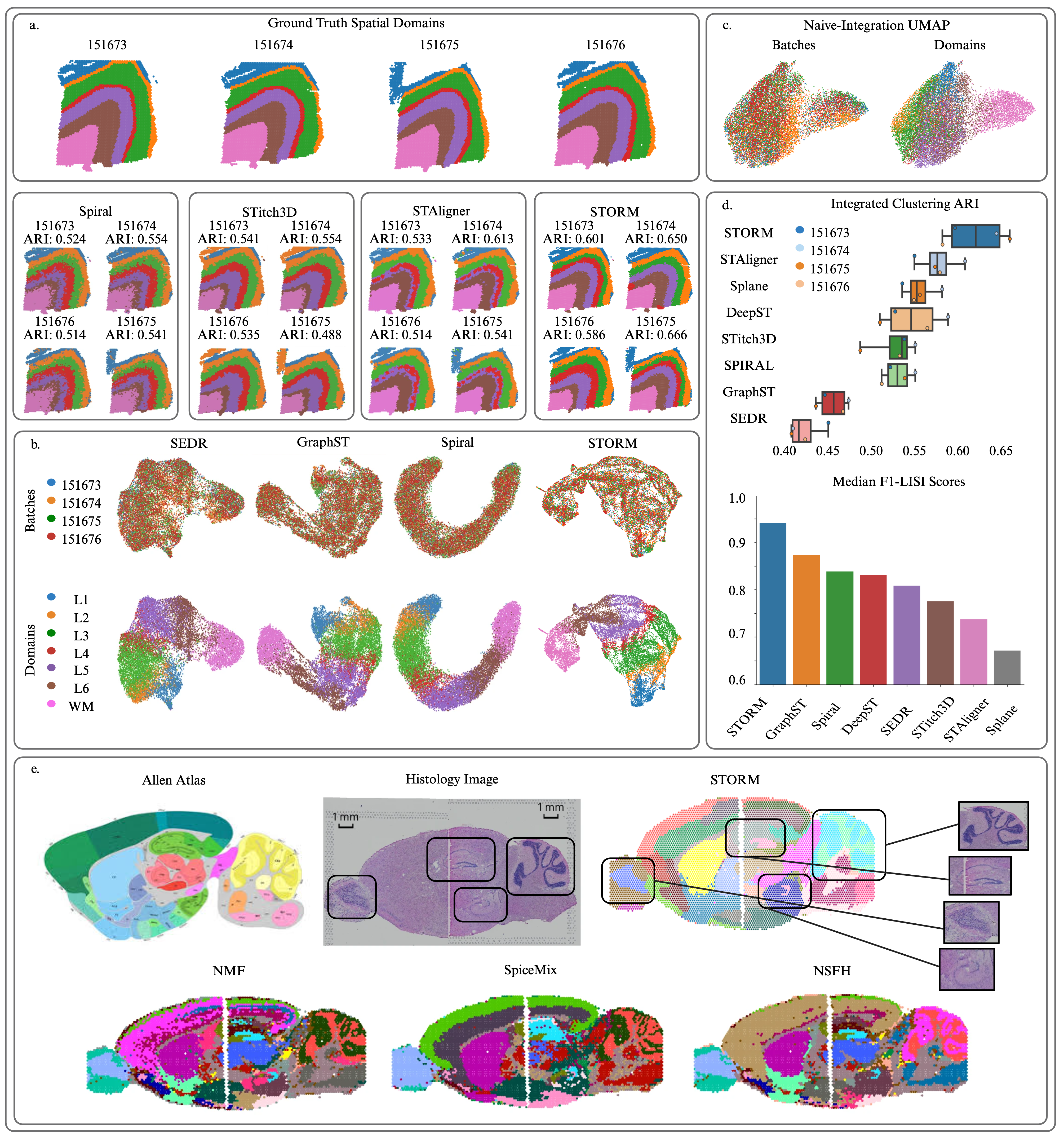}
        \caption{Illustration of STORM clustering. a. Ground truth annotations of DLPFC samples 151673, 151674, 151675, and 151676 into innermost white matter region and 6 linearly outward extending neuronal layers. We additionally present comparisons of various integrated spatial domain detections on these samples against the STORM induced domains as well as the ground truth. We note the strong correspondence between the STORM domains and the ground truth relative to other methods. b. Integrated UMAP embeddings of each method colored by sample and ground truth spatial domain. Coloring by batches provides a qualitative measure of the integration of each sample by each method. Coloring by ground truth spatial domain provides a measure of the preservation of genuine biological variation across the samples for each method. c. UMAP embeddings induced by joint PCA on the concatenated gene expression matrices shows poor integration of samples as well as poor separation of spatial domains- illustrating the presence of batch effects and motivating additional computational techniques. d. Quantitative comparison of the performance of various methods for clustering and integration of the DLPFC data according to ARI and F1LISI metrics. STORM achieves the highest performance according to both metrics. e. Horizontal integration of the anterior and posterior sections of a mouse brain. We display the histology image alongside the ground truth Allen Mouse Brain annotations and compare spatial domain detections of various methods against STORM. }
        \label{fig:integration}
    \end{figure}

    In Figure \ref{fig:integration}a, we visually examine the spatial domains predicted by each method. The ground truth tissue architecture is comprised of an innermost White Matter region with 6 neuronal layers extending outward linearly. STitch3D \cite{Wang2023STitch3D} exhibits an inability to reliably separate the outer neuronal Layers 1 and 2. STAligner, meanwhile, shows domain discontinuities in Layers 4 and 5. Spiral achieves a fairly accurate linear layer structure, though its domains also exhibit discrete anomalous spots, indicating poor separation in the latent space. This is confirmed in the shared UMAP plot of the Spiral embedding in Figure \ref{fig:integration}b, where ground truth cluster boundaries do not exhibit clean separation. This is reflected in Spiral's relatively lower median F1LISI score. In this same plot, we see that GraphST, meanwhile, shows an impressive integration of batches, but also cannot reliably separate the true clusters. This results in very poor spatial domain detection performance. Our STORM cleanly partitions the regions of the tissue, though in the case of Sample 151674 it cannot properly identify the thin boundary between inner and outer neuronal layers, and instead attempts to segment the innermost Layer 6 at the tissue edge into a separate domain. In spite of this, it still achieves the highest ARI on this sample by very faithfully reconstructing the other domains. Its UMAP plot also exhibits a meaningful integration of batches while retaining reliable separation between the clusters. 

    In Figure \ref{fig:integration}e, we further display the performance of STORM on a cross-sample horizontal integration of an Anterior and Posterior mouse brain sample from 10x Visium. We compared STORM with a baseline joint NMF, as well as the NSFH and SpiceMix methods \cite{Chidester2023SPICEMIX,Townes2023NSF}. As tissue samples can be significantly larger than the capture slides used for spatial transcriptomics, horizontal integration enables the data from multiple capture slides to be stitched together while ameliorating the issue of batch effects between experimental runs. We compare the spatial domain detections on this integrated sample with the manually annotated Allen Mouse brain atlas \cite{allen}. For these methods, most of the clusters show general agreement with the true tissue structure. However, NSFH failed to properly identify the simple lobule and  granular layers. SpiceMix, meanwhile, could not reliably identify the field CA1 pyramidal layer. The NMF clusters display considerable fragmentation, reflecting the lack of local spatial information in its modeling framework. All these three methods struggled in identifying the globus pallidus region. Our STORM, meanwhile, was able to accurately segment each of these structures. Interestingly, STORM is also the only method that distinguishes the anterior somatosensory areas from the posterior parietal association and visual areas. These results demonstrate STORM's ability to integrate large tissue samples in such a way that cross-sample regions are reliably identified while genuine biological variation between the samples does not become blurred. 
    
    \subsection{STORM Accurately Deconvolves Cell Type Compositions in Low Resolution Spatial Transcriptomics Experiments}

    Spatial transcriptomics platforms are often limited in their spatial resolution, and so the resultant data often contains spots with sizes in the range of 55 micrometers or more in diameter, e.g., the 10X Visium platform \cite{dlpfc}. This  means that for such technologies, each sample is in fact comprised of multiple cells or partial fractions of cells. A practical goal is then to predict the cell type compositions of these spots using a single-cell resolution reference dataset to demystify genuine cell type specific signals. Our STORM accomplishes this goal in two ways. The tensor factorization embeds a spot resolution spatial dataset into a shared latent space with a single cell reference. We can then use a GBDT to learn a mapping from the embedding coordinates to cell type distributions, or we can utilize k-Nearest-Neighbors (k-NN) to infer the cell types in spots from their latent space single cell neighbors. We demonstrate the effectiveness of both approaches in this section. To assess the performance of STORM in this task, we first compared our method against the results from a popular cell type deconvolution benchmarking paper \cite{Li2023DeconvolutionBenchmark}. This study compared the deconvolution performance of 18 methods on 13 synthetic datasets from the popular seqFISH+ and MERFISH platforms as seen in Figure \ref{fig:deconvolution}a. Synthetic data was constructed via taking a ground truth single-cell resolution spatial transcriptomics dataset and binning cells together to simulate low spatial resolution, i.e. spots with diameters in the range of 55 micrometers. The original single-cell resolution gene expression values can then be used as a non-spatial single cell reference.

    Figure \ref{fig:deconvolution}b illustrates the results of this comparison. The benchmarking study looked at various popular methods such as SpatialDWLS \cite{Dong2021SpatialDWLS}, CARD\cite{Ma2022CARD}, RCTD\cite{Cable2022RCTD}, Stereoscope\cite{Andersson2020Stereoscope}, STRIDE\cite{Sun2022STRIDE}, Tangram\cite{Biancalani2021Tangram}, Cell2Location\cite{Kleshchevnikov2022Cell2location}, DSTG\cite{Song2021DSTG}, SD2\cite{Li2022SD2}, SpaOTsc\cite{Cang2020SpaOTsc}, SPOTlight\cite{ElosuaBayes2021SPOTlight}, Berglund, NMFreg, SpatialDecon\cite{Danaher2022SpatialDecon}, DestVI\cite{Lopez2022DestVI}, novoSpaRc\cite{Moriel2021NovoSpaRc}, SpiceMix\cite{Chidester2023SPICEMIX}, and STdeconvolve\cite{Miller2022STdeconvolve}. It measured the Root Mean Squared Error and Jensen Shannon Divergence between the predicted cell type compositions of each spot compared to the ground truth original data. Scores were then averaged over each cell type to obtain an overall performance score for each method on each metric. Following this prescribed approach, we found that STORM was able to outperform all other methods considered in this study. Specifically, on the seqFISH+ data STORM achieved an average RMSE of 0.132, 33\% better than the 0.198 score of the next best SpatialDWLS. In terms of Jensen Shannon Divergence, STORM outperformed SpatialDWLS by 14.4\%, achieving an average score of 0.089. On the synthetic MERFISH data, STORM achieved a score of 0.147 and 0.156 in these metrics, outperforming the next best Cell2Location and DestVI respectively. In Supplementary Information Section 1, we manually examine the spatial distributions of various cell types against the ground truth distributions in this data. We observe that the predicted spatial mapping of cell types is generally very accurate in both data, highlighting the impressive performance of STORM in this task.
    
    \begin{figure}[H]
        \centering
        \includegraphics[width=\linewidth]{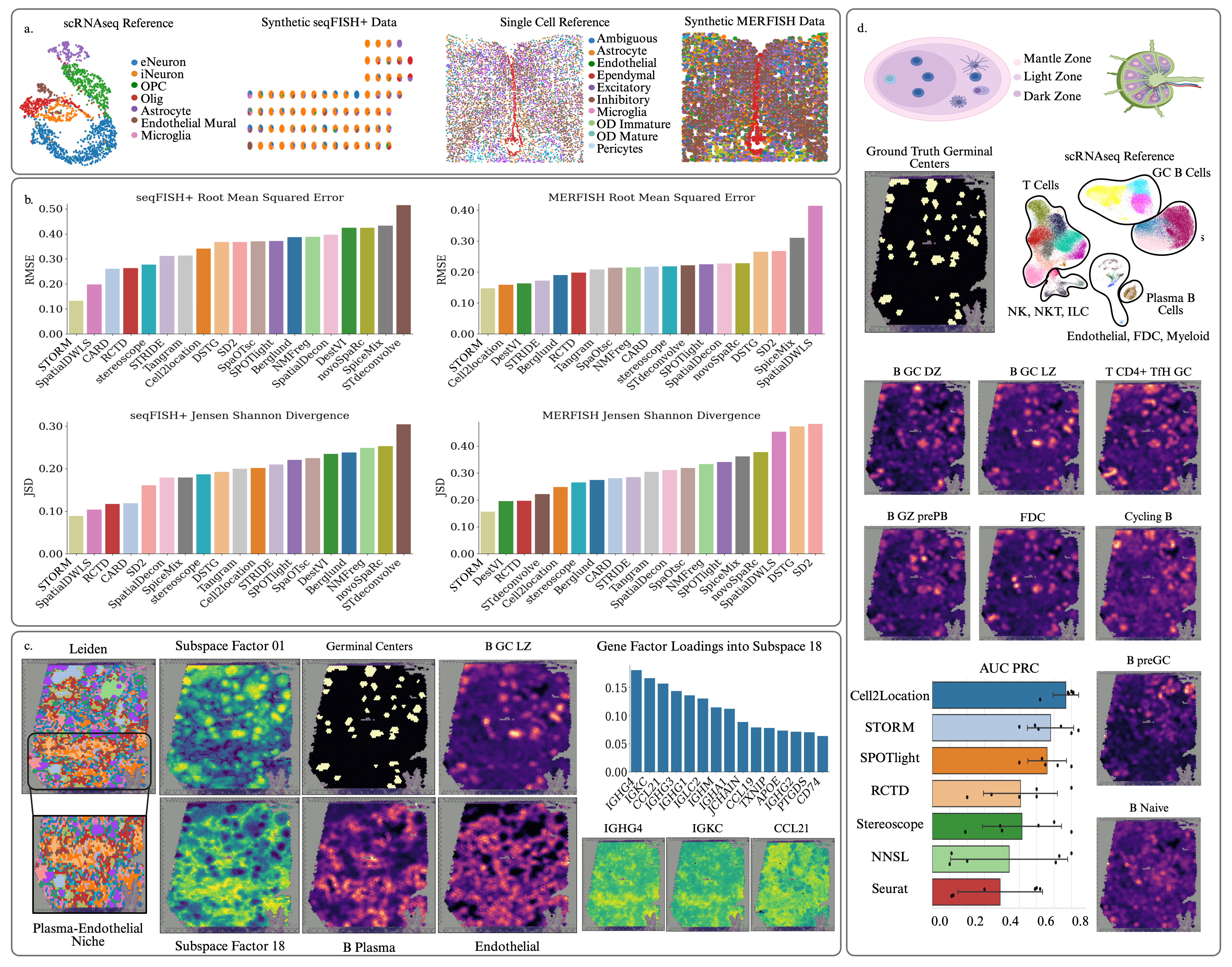}
        \caption{Illustration of STORM for cell type deconvolution. a. Synthetic low resolution seqFISH+ and MERFISH data is created by binning the cells in single-cell resolution ST data into multi-cell spots for convenient benchmarking. Both data types were binned to create synthetic spot sizes of 50 micrometers- matching the resolution of technologies such as Visium. b. Quantitative comparisons of cell type deconvolution performance of STORM against recent other methods taken from benchmarking results on the seqFISH+ and MERFISH synthetic data according to the JSD and RMSE metrics. c. Visualization of the spatial distribution of STORM induced spatial subspace factors $Q_kHD_k$ with Leiden spatial domains and cell type spatial mappings. The subspaces, clusters, and cell type proportions converge on coherent biological niches, such as germinal centers and plasma-endothelial microenvironments. Gene factor loadings in $B$ into these subspaces support a coherent biological narrative. This demonstrates the interpretatability of spatial factors as spatial niches. d. Spatial mapping of various B cells, FDCs, and T cells to germinal center zones of human lymph node sample by STORM as well as a quantitative comparison of STORM's performance against other methods in the task of mapping germinal center specific cells. }
        \label{fig:deconvolution}
    \end{figure}

    In Figure \ref{fig:deconvolution}c and d, we turn to several case studies examining the spatial mapping of single cells to low resolution Visium Human Lymph Node samples. In Supplementary Information Section 1 we perform a similar analysis on spatially mapping excitatory neurons and oligos cells to the dorsolateral prefrontal cortex. First, following the procedure outlined in the Cell2Location study \cite{Kleshchevnikov2022Cell2location}, we applied our STORM to spatially map tissue microenvironment of the human lymph node. The human lymph node is characterized by dynamic microenvironment. We jointly analyzed a Visium dataset of the human lymph node from 10x Genomics and spatially mapped a comprehensive atlas composed of 73,260 cells and 34 cell types. This data was derived by an integration of multiple sc-RNA-seq datasets from human secondary lymphoid organs. The Cell2Location study observed that a histological examination of the lymph node Visium sample revealed multiple germinal centers. Figure \ref{fig:deconvolution}d provides a graphical model of the general structure of germinal centers, which are comprised of a mantle zone, or a ring of mostly resting, naive B cells surrounding the GC, a light zone containing loosely packed post-maturation B cells along with follicular dendritic cells and Tfh cells, and a dark zone packed with rapidly dividing B cells. Utilizing a GBDT to map the distributions of cell types onto the spatial locations via STORM, we observe that the groups of cell types corresponded accurately to known functionally relevant cellular compartments of the lymph node. We reliably identified the aforementioned GC dark zone, where B cells proliferate along with the GC light zone, where they undergo selection by T follicular helper cells and FDCs to differentiate into antibody-producing plasmablasts. Quantitatively, we found that in terms of AUC PRC score, our cell type mapping was superior to other popular methods from the literature and generally competitive with Cell2Location. Additionally, STORM demonstrated an ability to spatially map the preGC B cell and Naive B cell populations in such a way that they occupy different tissue compartments from the upstream activated cells. Specifically, they seem to generally surround the GC regions while not being highly expressed within these regions, reflecting known biology.

    We next further analyze the STORM induced spatial subspace factors in this sample, and observing how they align with genuine biological niches.  In the top row, we highlight a spatial subspace factor which clearly aligns itself to the lymph node germinal centers, and we have noted that the single cells which spatially mapped to these regions accurately reflect what one would expect of this microenvironment structure. In the bottom row, we highlight a spatial subspace, which appears as a contiguous region that has high mapped abundance of B plasma cells and endothelial cells. Biologically, a plasma–endothelial niche in a lymph node is plausible because plasma cells/plasmablasts are not randomly distributed. Endothelial and associated stromal networks are major organizers of where immune cells sit, via chemokines/adhesion cues. Therefore, plasma enrichment coinciding with a vascular/stromal signature is highly plausible \cite{Skokos2010}. We additionally examine which genes had the highest factor loadings into this subspace. IGHG4, IGHG3, IGHG1, IGHG2, IGHA1, IGHM are heavy chains while IGKC, IGLC2 are light chains which make up the core of antibodies. JCHAIN is a hallmark of antibody-secreting cells \cite{CastroFlajnik2014}. CCL21 and CCL19 are chemokines produced largely in lymph node stromal compartments that organize immune cell trafficking and spatial zoning. Their presence suggests the factor is not only plasma, but plasma in a region defined by an endothelial guidance program. CXCL9 is typically associated with inflammatory chemokines that can show up in tumor microenvironment \cite{Brown2015}. Examining the spatial gene maps of IGHG4, IGKC, and CCL21, we note IGHG4/IGKC peaking where plasma B cell signal is high, while CCL21 generally outlines the endothelial guidance network. Taken together they mirror the subspace factor’s footprint, indicating its correspondence with a genuine spatial niche. The Leiden clustering of these spatial subspace factors also is spatially coherent and concordant with both the inferred cell-type maps from the single cell reference and factor-specific gene programs. In Supplementary Information Section 1 we expand on this analysis by analyzing the co-loadings of genes into each spatial subspace, as well as spatially plotting all 30 factors and observing their spatial localization which again corresponds with various cell niche structures. 
    
    \subsection{STORM Imputes Non-Measured Genes in Imaging-Based ST Data from sc-RNA-seq Reference}
    While some spatial transcriptomics platforms, such as several imaging based technologies, can achieve single cell or even sub-cellular spatial resolution, there is generally a trade off between spatial resolution and gene targeting depth, with many techniques not able to sequence the whole genome. Additionally, sequencing mechanical errors may create dropout events, where a measured gene shows zero activation in spite of the gene possibly having actually been transcribed. For this reason, gene imputation emerges as an important problem in ST data analysis. To this end, we combine our STORM joint dimensionality reduction framework with a powerful graph neural network architecture to impute the measurements of unmeasured genes. Figure \ref{fig:imputation}a demonstrates the STORM workflow for gene imputation. Specifically, we utilize STORM to jointly embed a spatial dataset with low gene sequencing depth and a full genome coverage single cell RNA-seq dataset. Additionally, the STORM gene factor loadings provide a data and protein interaction informed set of gene relations. We can then construct a 3-way tensor of genes, spots, and sc-RNA-seq samples which can be smoothed along its axes by a GNN. A final linear layer then predicts the final gene expression vectors for each spot. More details can be found in the Methods section. 

    For validating our technique we tested STORM against several popular methods on a series of several datasets. The datasets were normalized according to the procedure outlined in the SpaGE study \cite{Abdelaal2020SpaGE}. Specifically, a dataset underwent normalization, involving the division of counts within each cell by the total number of transcripts. This result was then scaled by the median number of transcripts per cell and log-transformed with a pseudo-count. The examined dataset pairs exhibit a wide array of levels of gene detection sensitivity, sample sizes, and the number of spatially measured genes. The sizes and statistics of each dataset can be found in the Supplementary Information Table 1. 
    
    \begin{figure}[H]
        \centering
        \includegraphics[width=\linewidth]{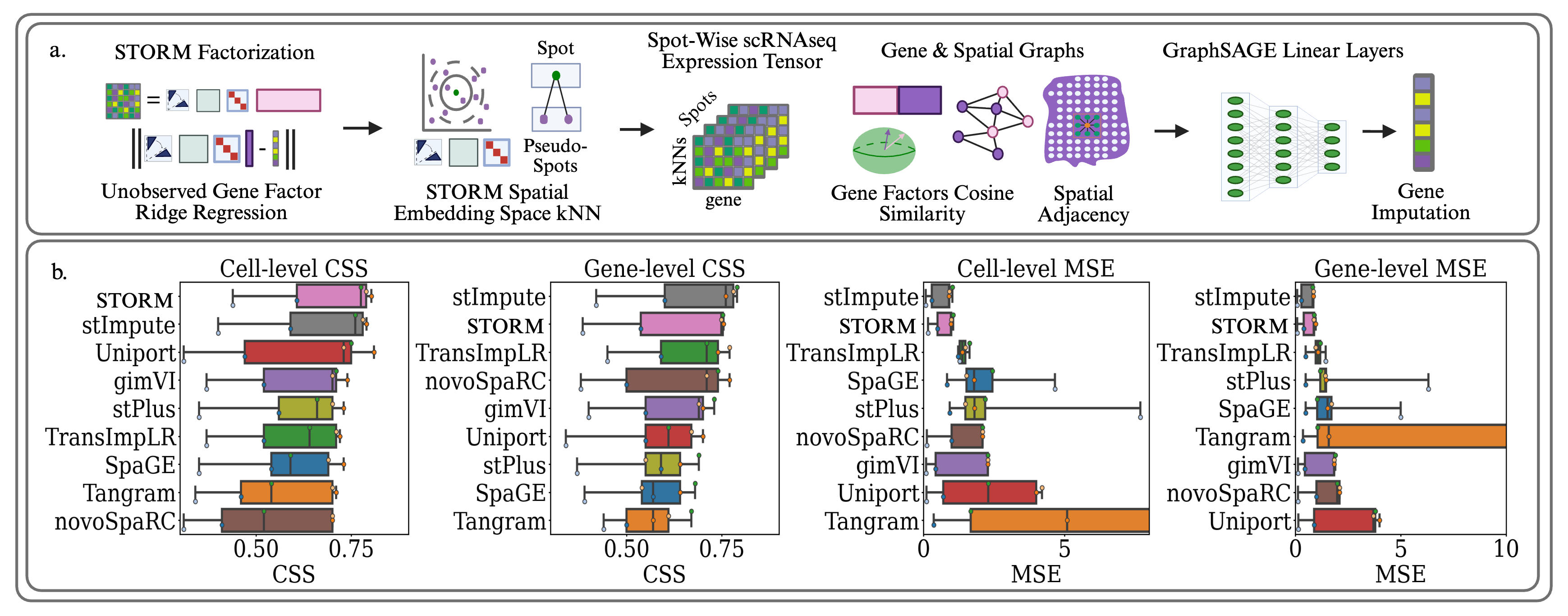}
        \caption{a. Overview of STORM workflow for gene imputation. STORM is used to construct a shared latent space between a ST sample and sc-RNA-seq reference in $Q_kHD_k$. This shared latent space is used to construct a 3-way sc-RNA-seq reference tensor of spatial spots and their neighboring single cell pseudo-spots, which can be input to a GraphSAGE model to predict imputed gene expression values for each ST cell from the single cell reference. A gene graph is constructed from the completed gene factor loadings in $B$ to further smooth the GraphSAGE model along this axis. b. Quantitative comparison of STORM against several other imputation methods on 5 benchmark datasets. Methods were measured on RMSE and CSS at the cell and gene levels for thoroughness of evaluation. STORM consistently ranks among or as the top performer. }
        \label{fig:imputation}
    \end{figure}
    
    Figure \ref{fig:imputation}b illustrates the gene imputation performance of STORM against 8 other state-of-the-art gene imputation techniques (stImpute \cite{Zeng2024stImpute}, Uniport \cite{Cao2022uniPort}, gimVI, stPlus \cite{Chen2021stPlus}, TransImpLR \cite{Qiao2024TransImpute}, SpaGE \cite{Abdelaal2020SpaGE}, Tangram \cite{Biancalani2021Tangram}, and novoSpaRC \cite{Moriel2021NovoSpaRc}) on 5 spatial transcriptomics and sc-RNA-seq pairs. We specifically examined imaging-based spatial transcriptomic (ST) technologies such as SeqFISH+, osmFISH, STARmap, and MERFISH, which generally achieve a limited number of genes from the entire transcriptome. In keeping with previous studies such as stImpute, Tangram, and TransImpLR, we looked at four metrics to evaluate the imputation performance. Specifically, we focus on the Cosine Similarity Score (CSS) and Mean Squared Error (MSE) between the predicted and observed spatial gene expressions. The CSS provides a measure of the correlation between predicted and observed gene expressions, whereas the MSE serves as an indicator of absolute gene value recovery ability. Consequently, a lower MSE value and a higher CSS value signify superior prediction performance. We tracked these metrics at the cell level to assess the preservation of crucial cellular characteristics, and at the gene level to measure the correlation between predicted and measured spatial profiles.
    
    We observe that STORM is consistently among the top performing methods in each metric, and in terms of the cell-wise cosine similarity it ranks as the best. Its performance is otherwise minorly edged out by the popular stImpute method. However, we emphasize that stImpute relies on gene relations inferred by latent space NLP representations of protein sequences coded for by each gene. This limits the interpretability of the GNN relations. Our STORM gene relations retain considerably more interpretability as they are constructed from experimentally validated PPI interactions along with the learned structure of the data during the tensor factorization. Additionally, the joint latent space representation in stImpute is iteratively trained via backpropagation with the gene imputation loss as its objective. The STORM joint latent space is learned solely from the biologically motivated joint subspace structure of the data, with no input from the gene imputation task, yet still manages to attain competitive performance and even outperform stImpute at the cell level. Ultimately, STORM achieved average scores of 0.683 and 0.735 with respect to cell-wise MSE and CSS and 0.649 and 0.635 with respect to gene wise metrics. Other techniques, such as Tangram, tend to perform very poorly with respect to the MSE metric. This may be due to its tendency to over predict the expression intensity for individual genes. The lower performance of Tangram could be because it is a tool for analyzing various aspects of spatial transcriptomics data but is not specifically optimized for gene imputation.
    
    \subsection{STORM Tracks Temporal Progression Trends in Developing StereoSeq Mouse Embryos}

    Finally, we move to utilizing STORM to examine the developmental stages of three mouse embryo slices  sampled at time stages of E10.5, E11.5, and E12.5 via the StereoSeq platform \cite{Chen2022StereoSeq}. In this manner we are able to investigate the spatiotemporal heterogeneity of tissue structures during mouse organogenesis. The original data suffered from batch effects, yet we can observe in the STORM embedding an effective integration which produces spatial domains that demonstrate consistent localization. Leiden was used to cluster the spatial regions of each tissue and the computed clusters were then annotated based on the presence of differentially expressed genes. 
    
    We note that in each layer, the predicted clusters show a high concordance with known marker genes of the major organs. For example, we note the detection of the forebrain region marked by the Hes5 gene, the Hindbrain marked by Ina, the hypothalamus by Six6, the cartilage structure by Col2a1, the heart by Myl7, the olfactory bulb and edge of the forebrain by Tbr1, and the epidermis by Krt5. Additionally, we can delineate the gastrointestinal tract and liver regions via the spatiotemporal signal of Cpox. Cpox encodes coproporphyrinogen oxidase, a mitochondrial enzyme in the heme biosynthesis pathway, and CPOX/heme-pathway activity is induced during erythroid differentiation \cite{Taketani2001CPOX_K562}. At E11.5, Cpox shows an elevated signal in the hepatic region and along the developing gut tube, which reflects increased heme/erythroid activity in associated vasculature and surrounding mesenchyme. By E12.5, the signal becomes more concentrated in the liver, consistent with the fetal liver becoming the major site of hematopoiesis around E11.5-E12.5 \cite{Crawford2010HepatobiliaryAtlas}.
 Ultimately, all the major tissues and organs (e.g., heart, liver and brain) were detected and each generally was consistent with the well-known anatomical structure of the organ. The primary exception to this is the detection of the heart region in the E10.5 stage, which was not as spatially localized as in the later stages. 
    
    \begin{figure}[H]
        \centering
        \includegraphics[width=\linewidth]{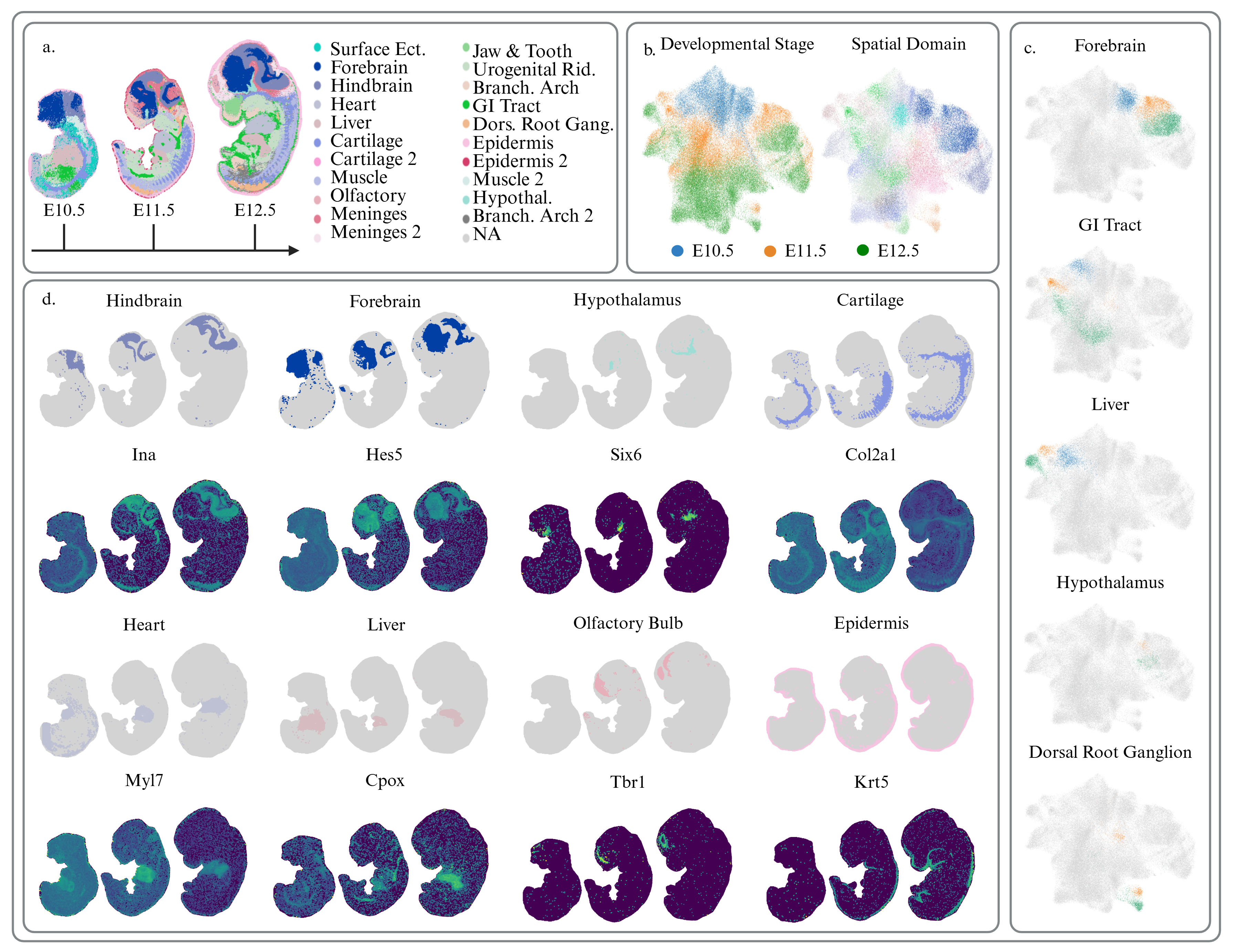}
        \caption{STORM tracking of mouse embryo evolution.   a. The STORM predicted spatial domains of the mouse embryo at developmental stages E10.5, E11.5, and E12.5. Clusters are annotated via the presence of known marker genes and show clear correspondence with known anatomical structure of the developing mouse embryo. b. UMAP embeddings produced by STORM colored by spatial domain as well as developmental stage label. These plots combine to demonstrate that STORM is able to integrate the samples without blurring the temporal differences between them. c. UMAP plots for several select spatial domains show that STORM embeddings depict clear developmental trends between time steps. Clusters in each development stage progress sequentially in the embedding space. Their co-localization implies the amelioration of batch effects, while their sequential ordering reflects that genuine biological variation has not been blurred. d. Plots of select spatial domains as well as differentially expressed marker genes within these domains. This demonstrates that the STORM domains show high concordance with known marker genes in the developing mouse embryo. }
        \label{fig:dynamic}
    \end{figure}

    Notably, we observe in Figure \ref{fig:dynamic}c that these spatial domains between each developmental stage also demonstrate reasonably co-localization patterns based on the UMAP visualization. Clusters in the embedding space generally form a compact manifold with a per-stage structure extending from E10.5 to E11.5 to E12.5 sequentially, reflecting temporal progression and development. This highlights the ability of STORM to integrate this temporal data in such a way as to alleviate batch effects while simultaneously preserving the stage specific variation. The stages overlap broadly rather than forming three completely disjoint clouds, while domains still show structured, directional shifts rather than randomly over-mixing. 
    
    \section{Discussion}
    Spatial transcriptomics is a powerful experimental approach that measures gene expression in its original spatial context. The applications of this information are manifold. However, it is known that there exist critical limitations in multi-slice spatial transcriptomics integrated analysis. Additionally, there is generally a recognized trade-off in single-cell resolution with a limited gene targeting depth. Here, we present STORM, a biologically interpretable tensor based framework for overcoming these limitations by jointly embedding multiple experiments into a shared latent space where the issue of batch effects has been ameliorated, but genuine biological variation between samples has been appropriately preserved. This is accomplished via  a Cartesian graph regularized irregular tensor decomposition on a set of gene expression matrices (spatially resolved or not). We begin by finding low dimensional subspaces of the original cell space and rotating them in this low dimensional space to ameliorate batch effects. Latent space features across slices are coupled and their relative importance per slice is scaled, and the gene expression activity is then mapped into these spatial subspaces. The vectorized tensor reconstruction is then smoothed across the Cartesian product / block structure of spatial, cross sample, and PPI graphs. This simultaneously encourages spatially adjacent spots within a sample, as well as spots paired between samples, to attain similar latent representations, while functionally related genes are mapped similarly into the spatial subspace. Ultimately, the resultant shared latent space structure of the datasets lends itself to a variety of powerful downstream analyses. 

    When utilizing STORM to horizontally integrate the anterior and posterior sections of a mouse brain sample, we observe that, relative to other popular approaches, our model displayed a superior ability to accurately segment the major regions of the brain in concordance with the Allen Mouse Brain Atlas. Specifically, the model was able to accurately group cross-sample domains while managing not to over-smooth to the extent that real variation between the sections became obscured. Additionally, when performing vertical integration on the Visium DLPFC data, STORM achieves state-of-the-art performance relative to other popular methods both in terms of spatial domain detection accuracy and joint embedding quality. Notably, STORM achieved this feat without any reliance on deep learning frameworks, which most of the other models employed. When applied to cell type deconvolution, STORM universally outperforms the results of a recent deconvolution benchmarking study. When tested on synthetic low resolution spatial data against the ground truth single-cell resolution, STORM generally outperformed the other methods When applied to real case studies of mapping single cells to spatial locations in the human lymph node and DLPFC, STORM was found to map cells in a manner which showed good correspondence with known cellular compartments of the lymph node, as well as layer specific distributions of cells in the DLPFC data. By utilizing STORM paired with a GNN framework for gene imputation, we again observed state-of-the-art or near state-of-the-art results on all tested data. Finally, when used to track temporal developments in mouse organogenesis, STORM revealed meaningful progression trends across samples and was able to accurately identify the major tissue and organ regions at each stage of development, showing a high concordance with marker gene distributions. Ultimately, we have demonstrated in this study the versatility and impressive performance of our STORM on a large variety of foundational computational problems in the field of spatial transcriptomics. 

    The spatial subspace framework in STORM comes out of a natural biological motivation. We recognize the assumption that gene expression variation in tissues is spatially structured. At the spatial domain level, cells belong to sets of recurring microenvironmental contexts, and so cells can be viewed as different samples of the same underlying niche-level expression structures. This is commonly exploited in graph-based computational methods which aggregate information via graph message passing. In our case, by projecting cells into spatial subspaces via the tensor of frames, we choose to emphasize the consistent domain-scale structure to ameliorate the gene expression variability between samples. This resolves the irregularity in the different number of cells between slices by focusing on a uniform number of spatial structures. Additionally, this addresses the current conceptual weakness in single slice spatial transcriptomics analysis which explores the structure of microenvironments as isolated 2D structures, when in reality they act as coordinated 3D environments. 

    Tensor based computational approaches to single cell omics data analysis have been previously explored in some contexts. In spatial transcriptomics, this has particularly been in the area of tensor completion for gene denoising. Song et al. performed a graph-guided neural tensor decomposition utilizing a three layer neural network to learn non-linear relations between the two dimensional spatial modes and gene modes. The tensor reconstruction was then assumed to contain denoised gene expression trends \cite{Song2023GNTD}. In the field of multi-omics integration Braytee et al. performed a 3 step  autoencoder, tensor construction, CDP decomposition framework for extracting interpretable latent factors and performing survival analysis \cite{Braytee2024CancerRiskMultiOmics}. Classical methods such as MONTI utilized Non-negative Tensor Decomposition for interpretable alignment of samples across modalities \cite{Jung2021MONTI}. To recover spatio-temporal information in spatial transcriptomics data, Zhou et al. proposed learning a four-dimensional transition tensor and spatial-constrained random walk, thereby reconstructing cell-state-specific dynamics and spatial state transitions via both short-time local tensor streamlines between cells and long-time transition paths among attractors \cite{Zhou2024STT}. However, no such studies make use of the cross-sample graph regularized irregular tensor decomposition framework for the joint embedding of spatial transcriptomics experiments. Such an approach enables a meaningful analysis of data across conditions, as well as across modalities, such as jointly analyzing sc-RNA-seq and ST data. Naturally, this framework could extend to other such types of multi-modal or multi-omics integration tasks, and can continue to see superior performance in these areas in future work. 
   \section{Methods: STORM Model}
\label{sec:method}
We consider $n_b$  slices/batches/samples of gene expression matrices. Slice $k$   contains $n_{s_k}$ spatial spots and $n_{g_k}$ genes, with expression matrix $X_k\in\mathbb{R}^{n_{s_k}\times n_{g_k}}$. After intersecting gene sets across slices, we obtain a common gene set of size $n_g$, and hence $X_k\in\mathbb{R}^{n_{s_k}\times n_{g}}$. The multi-slice spatial transcriptomics setting is inherently multi-way: gene expression varies jointly across spatial location, gene identity, and experimental condition. Modeling these interactions through matrix concatenation collapses the slice structure and obscures mode-specific geometry. Tensor decompositions, such as CP/PARAFAC \cite{harshman1970foundations} %, Tucker decomposition \cite{tucker1966some,cai2021mode},
and PARAFAC2 \cite{kiers1999parafac2}, provide principled multilinear frameworks for modeling such multi-way data. In particular, PARAFAC2 is designed to accommodate settings in which one mode varies in dimension across slices while preserving a shared latent covariance structure.  %For a comprehensive overview of tensor decompositions, see \cite{kolda2009tensor}. 

To preserve the multi-way structure of spatial transcriptomics data while accommodating slice-specific spatial dimensions, we represent the collection  $\{X_k\}_{k=1}^{n_b}$ as an irregular third-order tensor  $\mathcal{X} \in  \mathbb{R}^{n_{s_k} \times n_g \times n_b}$, whose modes correspond to spatial locations, genes, and slices.   The spatial mode is irregular as $n_{s_k}$ may differ across slices. STORM constructs a unified latent representation by aligning each slice through a low-dimensional spatial subspace, thereby resolving spatial irregularity by aligning the samples at the level of their biologically meaningful microenvironment structures (see  Figure~\ref{fig:STORM}). The shared latent space can then be mapped back to the original spatial resolution for downstream analysis.

\subsection{Spatial Subspace Alignment}
To address irregularity in the number of spatial spots, each slice $X_k$ is modeled through an $R$-dimensional spatial subspace, which can be naturally interpreted as the microenvironment structures of the tissue, with orthonormal basis
\[
Q_k \in \mathbb{R}^{n_{s_k} \times R},
\qquad
Q_k^\top Q_k = I_R.
\]
Collecting these slice-specific frames yields
\[
\mathcal{Q} = \{Q_k\}_{k=1}^{n_b}.
\]
The orthonormality constraint ensures that each $Q_k$ defines a rotation into a shared $R$-dimensional microenvironment space while preserving local geometry. Projecting into these subspaces yields a regular spatial subspace tensor
\[
\mathcal{W} \in \mathbb{R}^{R \times n_g \times n_b}.
\]

\subsection{PARAFAC2 Factorization}
\begin{Def}[PARAFAC2 Model]
Given matrices $\{X_k\}_{k=1}^{n_b}$ with 
$X_k \in \mathbb{R}^{n_{s_k}\times n_g}$, 
a rank-$R$ PARAFAC2 factorization consists of matrices
$U_k \in \mathbb{R}^{n_{s_k}\times R}$,
$S_k \in \mathbb{R}^{R\times R}$,
and $V \in \mathbb{R}^{n_g\times R}$
such that
\[
X_k \approx U_k S_k V^\top,
\]
and the Gram matrices of the slice-specific factors satisfy
\[
U_k^\top U_k = \Phi
\quad \text{for all } k,
\]
for some fixed positive semidefinite matrix 
$\Phi \in \mathbb{R}^{R\times R}$.
\end{Def}
STORM factorizes the subspace tensor as
\[
\mathcal{W} = \mathcal{D} \times_1 H \times_2 B,
\]
where
\begin{itemize}
\item $H \in \mathbb{R}^{R \times R}$ captures shared latent covariance structure across slices,
\item $B \in \mathbb{R}^{n_g \times R}$ contains shared gene loadings,
\item $\mathcal{D}_{::k} = D_k \in \mathbb{R}^{R \times R}$ is diagonal and modulates slice-specific component strengths.
\end{itemize}
Accordingly, each frontal slice satisfies
\[
W_k = H D_k B^\top.
\]
Mapping back to the original spatial coordinates yields
\[
X_k \approx Q_k H D_k B^\top.
\]
\begin{thm}
\label{thm:shared_cov}
Assume the STORM parameterization
\[
X_k \approx Q_k H D_k B^\top,\qquad k=1,\dots,n_b,
\]
satisfies the following conditions:
\begin{enumerate}[label=(\roman*)]
  \item For each slice $k$, the spatial frame $Q_k\in\mathbb{R}^{n_{s_k}\times R}$ has orthonormal columns, $Q_k^\top Q_k = I_R$.
  \item $H\in\mathbb{R}^{R\times R}$   has full column rank i.e., $H$ is invertible.
  \item Each $D_k\in\mathbb{R}^{R\times R}$ is diagonal.
  \item $B\in\mathbb{R}^{n_g\times R}$ has full column rank.
\end{enumerate}
Define the slice-specific embedded factors
\[
U_k := Q_k H \in\mathbb{R}^{n_{s_k}\times R}.
\]
Then for every slice $k$,
\[
U_k^\top U_k = H^\top H,
\]
i.e. the Gram  matrix of the embedded factors $U_k$ is \emph{identical} across slices. Consequently, all slices share the same latent covariance geometry in the aligned $R$-dimensional subspace; slice-specific variability in component \emph{magnitudes}  is captured by the diagonal scaling matrices $\{D_k\}$.
\end{thm}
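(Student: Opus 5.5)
The identity is a direct computation, and I would do it first and then explain how the rest of the statement follows from it. For each $k$, I would expand the Gram matrix using the transpose of a product and associativity:
\[
U_k^\top U_k=(Q_kH)^\top(Q_kH)=H^\top (Q_k^\top Q_k) H .
\]
Condition (i) gives $Q_k^\top Q_k=I_R$, so this reduces to $H^\top H$. The right-hand side depends on neither $k$ nor $n_{s_k}$, so the Gram matrices agree across slices even though the $U_k$ have different numbers of rows.

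This is exactly the PARAFAC2 constraint of the Definition, with $\Phi:=H^\top H$. I would record that $\Phi$ is symmetric positive semidefinite, since $x^\top\Phi x=\|Hx\|^2\ge 0$. By condition (ii), $Hx=0$ forces $x=0$, so $\Phi$ is in fact positive definite. I would also match $X_k\approx U_kD_kB^\top$ to the Definition by taking $S_k=D_k$ and $V=B$. This shows that STORM is a PARAFAC2 model with diagonal slice weights.

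For the interpretive ``consequently'' clause, I would write the slice model as $X_k\approx U_k D_k B^\top$. Here the shared geometry $\Phi$ is fixed, and $B$ is shared across slices. Condition (iii) means the only slice-dependent factor acting on the latent coordinates is the diagonal $D_k$, which rescales component magnitudes. Condition (iv), that $B$ has full column rank, guarantees that these rescaled components are not collapsed when they are mapped to gene space. Together, (iii) and (iv) justify the interpretation that between-slice differences in the aligned subspace are pure component scalings.

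There is no real obstacle in this argument. The only care needed is to separate what is proved from what is interpretive. The equality $U_k^\top U_k=H^\top H$ uses only (i). Conditions (ii) through (iv) serve only to upgrade $\Phi$ to positive definite and to make the ``magnitudes captured by $D_k$'' reading meaningful. I would state this explicitly in the write-up.
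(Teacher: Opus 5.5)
Your proposal is correct and matches the paper's proof, which is the same one-line computation $U_k^\top U_k = H^\top (Q_k^\top Q_k) H = H^\top H$ using only condition (i). Your extra remarks correspond to the paper's subsequent corollary rather than to the theorem's proof itself: that $H^\top H$ is positive definite under (ii), and that this yields the PARAFAC2 constraint with $S_k = D_k$, $V = B$.
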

\begin{proof}
Direct calculation using $Q_k^\top Q_k = I_R$ gives
\[
U_k^\top U_k = (Q_k H)^\top (Q_k H) = H^\top (Q_k^\top Q_k) H = H^\top H,
\]
which is independent of $k$, proving the claim.
\end{proof}
\noindent Theorem~\ref{thm:shared_cov} establishes that the embedded slice-specific factors share an identical Gram matrix. This  implies that the STORM factorization satisfies the defining PARAFAC2 constraint, as stated below.
\begin{cor}
Under the assumptions of Theorem~\ref{thm:shared_cov}, 
define $U_k = Q_k H$, $S_k = D_k$, and $V = B$. 
Then the STORM factorization satisfies
\[
X_k \approx U_k S_k V^\top,
\]
and the Gram constraint
\[
U_k^\top U_k \equiv H^\top H
\]
holds for all $k$. 
Hence STORM belongs to the PARAFAC2 model class.
\end{cor}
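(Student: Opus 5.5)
The plan is to check, one at a time, the three requirements in the Definition of the PARAFAC2 Model for the choices $U_k = Q_k H$, $S_k = D_k$, $V = B$.

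\textbf{Dimensions.} $U_k = Q_k H$ is a product of an $n_{s_k}\times R$ matrix with an $R\times R$ matrix, so $U_k \in \mathbb{R}^{n_{s_k}\times R}$. Next, $S_k = D_k \in \mathbb{R}^{R\times R}$ by assumption (iii), and $V = B \in \mathbb{R}^{n_g\times R}$. These are exactly the shapes the Definition requires.

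\textbf{Approximation.} By associativity, $U_k S_k V^\top = (Q_k H) D_k B^\top = Q_k H D_k B^\top$. So the STORM approximation $X_k \approx Q_k H D_k B^\top$ is literally the same statement as $X_k \approx U_k S_k V^\top$. No new approximation error is introduced.

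\textbf{Gram constraint.} Theorem~\ref{thm:shared_cov}, which uses only assumption (i), gives $U_k^\top U_k = H^\top H$ for every $k$. Take $\Phi := H^\top H$. The Definition needs $\Phi$ to be a fixed positive semidefinite matrix.
\begin{itemize}
\item $\Phi$ does not depend on $k$, because $H$ is shared across slices.
\item $\Phi$ is symmetric, since $(H^\top H)^\top = H^\top H$.
\item $\Phi$ is positive semidefinite, since $x^\top H^\top H x = \|Hx\|^2 \ge 0$ for all $x$.
\item By assumption (ii) $H$ is invertible, so $\|Hx\|^2 > 0$ for $x \neq 0$. Hence $\Phi$ is in fact positive definite, which is more than the Definition asks.
\end{itemize}

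\textbf{Conclusion and main difficulty.} Since all three requirements hold, the STORM factorization is a rank-$R$ PARAFAC2 factorization. The only step needing any care is the last one: the Definition requires a single fixed positive semidefinite $\Phi$, so one must state explicitly that $H^\top H$ is independent of $k$ and positive semidefinite. Assumption (iv) on $B$ is not needed for membership in the PARAFAC2 class; it matters only for identifiability. The argument is otherwise routine.
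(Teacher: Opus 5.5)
Your proposal is correct and matches the paper's argument: the corollary follows by substituting $U_k=Q_kH$, $S_k=D_k$, $V=B$ into the PARAFAC2 definition and invoking Theorem~\ref{thm:shared_cov} with $\Phi = H^\top H$. Your explicit check that $H^\top H$ is independent of $k$ and positive semidefinite (indeed positive definite under (ii)) fills in a detail the paper leaves implicit.
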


As the STORM core factorization is a PARAFAC2 model, it inherits the standard PARAFAC2 indeterminacies, namely permutation and scaling of components (and, in certain cases, sign indeterminacy). PARAFAC2 solutions are known to be essentially unique under generic non-degeneracy conditions and sufficient diversity across slices, with formal uniqueness analyses provided in the PARAFAC2 literature (e.g., \cite{ten1996some}) and discussed in foundational algorithmic treatments \cite{kiers1999parafac2}. Consequently, under mild conditions, the shared gene loadings $B$ and slice-specific component weights $\{D_k\}$ are identifiable up to permutation and scaling.
%This structure ensures that latent microenvironment directions share a common covariance geometry across slices, while slice-specific scaling matrices $D_k$ capture condition-dependent variation.

\subsection{Graph-Regularized Reconstruction}

To incorporate spatial and functional priors into the tensor factorization, we introduce graph regularization jointly across spatial and gene modes. Let $n_s = \sum_{k=1}^{n_b} n_{s_k}$. For any tensor $\mathcal{Y} = \{Y_k\}_{k=1}^{n_b}$ with frontal slices $Y_k \in \mathbb{R}^{n_{s_k} \times n_g}$, define the stacking operator
\begin{equation}
\mathrm{mat}(\mathcal{Y})
=
\begin{bmatrix}
Y_1 \\
\vdots \\
Y_{n_b}
\end{bmatrix}
\in \mathbb{R}^{n_s \times n_g}.
\end{equation}
Functional relationships between genes are modeled using a graph with adjacency matrix 
$A_g \in \mathbb{R}^{n_g \times n_g}$ (e.g., a protein–protein interaction network). 
Let
\[
D_g = \mathrm{Diag}(A_g \mathbf{1}),
\qquad
L_g = D_g - A_g
\]
denote the corresponding degree matrix and Laplacian of the gene-gene interactions.

 It is further generally assumed that gene expression varies smoothly in spatial neighborhoods of the tissue. Within each slice, spatial adjacency is constructed using physical coordinates $p_{k,i} \in \mathbb{R}^2$ (e.g., radius graph or $k$-nearest neighbors (k-NN)). 
To smooth variation across slices, one must couple spots across the slices. In this work, cross-slice edges are defined via shared embeddings or optimal transport alignments. We therefore build a block spatial Laplacian
\begin{equation}\label{eqn:spatialLap}
L_s=
\begin{bmatrix}
L_s^{(11)} & \cdots & L_s^{(1n_b)}\\
\vdots & \ddots & \vdots\\
L_s^{(n_b1)} & \cdots & L_s^{(n_bn_b)}
\end{bmatrix},
\qquad
L_s^{(k\ell)}\in\mathbb{R}^{n_{s_k}\times n_{s_\ell}}.
\tag{5}
\end{equation} 
where diagonal blocks $L_s^{(kk)}$ encode within-slice adjacency (threshold $\varepsilon_1$), while off-diagonal blocks
$L_s^{(k\ell)}$ encode between-slice adjacency (threshold $\varepsilon_2$).   

For slice $k$, let $p_{k,i}\in\mathbb{R}^2$ be the physical coordinate of spot $i$.
We build a within-slice adjacency $A_s^{(kk)}\in\mathbb{R}^{n_{s_k}\times n_{s_k}}$ using a radius or $k$-NN rule, e.g.,
\[
(A_s^{(kk)})_{ij}
=
\mathbf{1}\{\|p_{k,i}-p_{k,j}\|_2\le \varepsilon_1\}.
\]

Let $z_{k,i}\in\mathbb{R}^R$ denote the aligned latent coordinate of spot $i$ in slice $k$. This can be obtained via an optimal transport based coupling such as PASTE, or more simply by a naive joint embedding into some $R$ dimensional shared space, say by PCA. 
For $k\neq \ell$, we define cross-slice adjacency blocks $A_s^{(k\ell)}\in\mathbb{R}^{n_{s_k}\times n_{s_\ell}}$ by
\[
(A_s^{(k\ell)})_{ij}
=
\mathbf{1}\{\|z_{k,i}-z_{\ell,j}\|_2\le \varepsilon_2\},
\]
and enforce an undirected graph by setting $A_s^{(\ell k)}=(A_s^{(k\ell)})^\top$.

Let $A_s$ be the block matrix with blocks $\{A_s^{(k\ell)}\}_{k,\ell}$ and define
$D_s=\mathrm{Diag}(A_s\mathbf{1})$ and $L_s=D_s-A_s$.
Then for $k\neq \ell$, $L_s^{(k\ell)}=-A_s^{(k\ell)}$ (rectangular in general),
and $L_s$ is symmetric with $L_s^{(k\ell)}=(L_s^{(\ell k)})^\top$. This defines a spatial block graph Laplacian successfully coupling all cells within each slice as well as between slices to further smooth batch effects. 

The goal of graph regularization is to encourage smooth variation of reconstructed gene expression across spatial neighborhoods and functionally related genes. 
Smoothness is imposed through the Cartesian product Laplacian. 
For any reconstruction tensor $\mathcal{Y}$, we define the regularization  as
\begin{align}
\mathcal{R}(\mathcal{Y})
&=
\mathrm{Tr}\!\big(\mathrm{mat}(\mathcal{Y})^\top L_s\,\mathrm{mat}(\mathcal{Y})\big)
+
\mathrm{Tr}\!\big(\mathrm{mat}(\mathcal{Y}) L_g\,\mathrm{mat}(\mathcal{Y})^\top\big) \nonumber\\
&=
\mathrm{vec}(\mathrm{mat}(\mathcal{Y}))^\top
(L_g \oplus L_s)
\mathrm{vec}(\mathrm{mat}(\mathcal{Y})),
\end{align}
where
\[
L_g \oplus L_s
=
(L_g \otimes I_{n_s})
+
(I_{n_g} \otimes L_s)
\]
is the Cartesian graph product Laplacian.

Graph-regularized tensor factorizations have been widely studied to incorporate relational or manifold structure into multilinear models, including Laplacian-regularized nonnegative tensor factorization and graph-constrained Tucker  decompositions \cite{chen2022unsupervised,qiu2020generalized}, as well as coupled graph–tensor factorization frameworks \cite{ioannidis2019coupled}. These approaches typically impose smoothness on shared tensor modes. In contrast, STORM integrates graph regularization directly within a PARAFAC2 framework, enabling joint smoothing across spatial and gene modes while preserving irregular slice-specific spatial dimensions.

Recall that the reconstruction tensor is defined as
\[
\mathcal{Y} = \mathcal{Q} \Delta \mathcal{W},
\qquad
(\mathcal{Q} \Delta \mathcal{W})_{::k} = Q_k H D_k B^\top.
\]
The graph-regularized PARAFAC2 model is formulated as
\begin{equation}
\label{eq:graph_tensor_model}
\min_{\mathcal{Q},\,\mathcal{W}}
\;
\|\mathcal{X} - \mathcal{Q} \Delta \mathcal{W}\|_F^2
+
\gamma\,\mathcal{R}(\mathcal{Y})
\quad
\text{s.t.}\quad
Q_k^\top Q_k = I_R \;\; \forall k \text{ and } \mathcal{Y} = \mathcal{Q} \Delta \mathcal{W}
\end{equation}
Because the reconstruction depends left-multiplicatively on the slice-specific spatial frames $\{Q_k\}$, the graph penalty acts directly on the spatial subspaces while preserving the PARAFAC2 Gram structure. The resulting objective is nonconvex due to bilinear coupling and orthogonality constraints. We solve it using an ADMM-based alternating minimization scheme (Supplementary Information Section~4). Adapting AO-ADMM to solve STORM has the following benefits. First, AO-ADMM is more general than other methods in the sense that the loss function doesn’t need to be differentiable. Additionally, it is simple to implement, parallelize, and it is easy to incorporate a wide variety of constraints that can be obtained using simple element-wise operations- leaving the door open for additional biologically informed regularization. There are also computational savings gained by using the Cholesky decompositions in each of the solver steps, while the splitting scheme can be applied to large scale data. Data can be distributed across different machines and optimized locally with communication on the primal, auxiliary and dual variables between the machines. As spatial transcriptomics experiments continue to grow in their spatial resolution and gene targeting depth this scalability becomes increasingly important. %Under standard regularity and boundedness assumptions, nonconvex ADMM-type algorithms are known to converge to first-order stationary points (e.g., \cite{hong2016convergence}). In practice, we observe stable convergence across all datasets. 

The resulting embeddings $\{Q_k H D_k\}_{k=1}^{n_b}$ provide vertically and horizontally integrated latent coordinates for each slice, which are subsequently used for downstream analyses including cell-type deconvolution and gene imputation. Detailed procedures are provided in Supplementary Information Section~4, and parameter ablation studies assessing performance sensitivity and computational efficiency are reported in Supplementary Information Section~5.

 \section{Acknowledgements}
This work was supported in part by NIH grant R35GM148196, National Science Foundation grants DMS2052983 and DGE2152014,  Michigan State University Research Foundation, and  Bristol-Myers Squibb 65109.

 \section{Data and Code Availability}
 The Visium DLPFC data is available in h5ad format at \href{https://weilab.math.msu.edu/DataLibrary/SpatialTranscriptomics/}{WeiLab}. The anterior and posterior mouse brain sections can be found at \href{https://zenodo.org/records/10698931}{Zenodo}. The MERFISH data is available via the SquidPy python library. The seqFISH+ data can be obtained via the scVI python library. The Human Lymph Node ST sample and sc-RNA-seq reference file are publicly available for download through S3 bucket. The DLPFC sc-RNA-seq reference file is available at \href{https://www.ncbi.nlm.nih.gov/geo/query/acc.cgi?acc=GSE144136}{GSE144136}. The ST and sc-RNA-seq data used in the gene imputation analyses are each available at the following repository \href{https://doi.org/10.5281/zenodo.3967290}{Zenodo}. The StereoSeq mouse organogenesis data is available at \href{https://en.stomics.tech/resources/stomics-datasets/list.html}{STOmics}. 
 %A python implementation of STORM as well as files for reproducing the results stated in this study can be found under the following GitHub Repository \href{https://github.com/seanfcottrell/STORM/tree/main}{https://github.com/seanfcottrell/STORM}.
 \newpage
%\bibliographystyle{unsrt}
% \bibliographystyle{abbrv}
%\myexternaldocument{supplementary}

 \bibliographystyle{plain}
\bibliography{refs}

\end{document}